\newtheorem{theorem}{Theorem}
\newtheorem{example}{Example}
\newcommand{\twotriangle}{\hfill $\bigtriangleup \bigtriangleup$  }
\newcommand{\eax}{\twotriangle  \end{example}}
\newcommand\bim{\begin{itemize}}
\newcommand\eim{\end{itemize}}
\newcommand\bH{{\bf H}}
\newcommand\bA{{\bf A}}
\newcommand\bI{{\bf I}}
\newcommand\bG{{\bf G}}
\newcommand\bD{{\bf D}}
\newcommand\bB{{\bf B}}
\newcommand\bP{{\bf P}}
\newcommand\bV{{\bf V}}
\newcommand\bW{{\bf W}}
\newcommand\ba{{\bf a}}
\newcommand\bc{{\bf c}}
\newcommand\bd{{\bf d}}
\newcommand\bm{{\bf m}}
\newcommand\bw{{\bf w}}
\newcommand\bz{{\bf z}}
\title{
A Low-Complexity Encoding of Quasi-Cyclic Codes Based on Galois Fourier Transform
}
\begin{document}

\author{
Qin Huang$^1$,  Li Tang$^1$, Zulin Wang$^1$, Zixiang Xiong$^2$, \emph{Fellow, IEEE}, and Shanbao He$^3$\\
$^1$School of Electronic and Information Engineering,
Beihang University, Beijing, China, 100191\\
$^2$Dept of ECE, Texas A\&M University, College Station, TX, USA, 77843\\
$^3$China Academy of Space Technology, Beijing, China, 100048\\
(email:qhuang.smash@gmail.com; neathe@163.com; wzulin\_201@163.com; \\ zx@ece.tamu.edu; heshanbao@cast.cn)\\
\thanks{ This work was supported by National Natural Science Foundation of China (61201156).
}
}

\maketitle

\begin{abstract}
The encoding complexity of a general ($en$,$ek$) quasi-cyclic code is $O(e^2(n-k)k)$.
This paper presents a novel low-complexity encoding algorithm for quasi-cyclic (QC) codes based on matrix transformation.
First, a message vector is encoded into a transformed codeword in the transform domain.
Then, the transmitted codeword is obtained from the transformed codeword by the inverse Galois Fourier transform.
For binary QC codes, a simple and fast mapping is required to post-process the transformed codeword such that the transmitted codeword is binary as well.
The complexity of our proposed encoding algorithm is $O(e(n-k)k)$ symbol operations for non-binary codes and $O(e(n-k)k\log_2 e)$ bit operations for binary codes. These complexities are much lower than their traditional counterpart $O(e^2(n-k)k)$.
For example, our complexity of encoding a 64-ary (4095,2160) QC code is only $1.59\%$ of that of traditional encoding, and
our complexities of encoding the binary (4095, 2160) and (8176, 7154) QC codes are respectively
$9.52\%$ and $1.77\%$ of those of traditional encoding.
We also study the application of our low-complexity encoding algorithm to one of the most important subclasses of QC codes, namely QC-LDPC codes, especially when their parity-check matrices are rank deficient.
\end{abstract}

\begin{IEEEkeywords}

Quasi-cyclic codes, LDPC codes, encoding complexity, redundant rows, matrix transformation, and the Galois Fourier transform.
\end{IEEEkeywords}

\section{Introduction}

 \emph{Quasi-cyclic (QC)} codes \cite{Townsend_67} are an important class of linear error-correcting codes in both coding theory and their applications.
These codes can asymptotically approach the Varshamov-Gilbert bound  \cite{Kasami74}.
Moreover, their partial cyclic structure simplifies their encoding and decoding implementations by using simple shift registers and logic circuits \cite{LC04}.
In recent years, research on QC codes has focused on one of their subclasses, known as QC low-density parity-check (LDPC) codes \cite{Bon08}-\cite{LZTCLG07}, which have been shown to perform as well as other types of LDPC codes  in most applications.  QC-LDPC codes have advantages over other types of LDPC codes in hardware implementation of encoding \cite{LCZLF06} and decoding \cite{TLZLG06},  \cite{LZTCLG07}.
Thus, most LDPC codes adopted as standard codes for various next-generation communication and storage systems are QC.

One important development of QC codes is the introduction of matrix transformation via the \emph{Galois Fourier transform}  \cite{K07, HLG_10}.
In the Fourier transform domain, an array of circulants is specified by a diagonal matrix over a finite field. It has been shown to be amenable to analysis and construction of some QC-LDPC codes \cite{HLG_12}. Furthermore, the authors of \cite{HLW_12} expanded the analyses of ranks and row-redundancies into most QC-LDPC codes.

QC codes generally are encoded by multiplying a message vector $\bm$ of length $ek$ with an $ek\times en$  generator matrix $\bG$, where $\bG$ usually is systematic \cite{LC04} with $\bG=[\bI \;\vdots\; \bP]$ and $\bI$ being the identity matrix $-$
in the rest of the paper, $\bG$ is systematic if not stated otherwise.
There are two issues with the implementation of QC code systems.
First, the generator matrix $\bG$ is usually not sparse, so it requires a large number, i.e., $e(n-k)k$, of memory units to store $\bP$.
Second, although encoding of QC codes can be partially parallelized so that the computation units are reduced by a factor of $e$, the total number of symbol operations is still $e^2(n-k)k$, which is the same as that for general linear codes.

In this paper, we propose to encode QC codes in the Fourier domain rather than using direct multiplications in the symbol domain.
We are motivated by the fact that the row space of the transformed generator matrix is the null space of the transformed parity-check matrix.
Moreover, the $ek\times en$ transformed generator matrix is an $e\times e$ diagonal array of $k\times n $ matrices.
Consequently, \emph{encoding in the transform domain (ETD)} is achieved by $e$ times encoding of message vectors of length $k$ with $k\times n$ generator matrices (rather than encoding of an $ek$ message vector with an $ek\times en$ generator matrix).
We call the resulting vector after ETD the \emph{transformed codeword}.
The final transmitted QC codeword is obtained by permutations and $n$ times inverse Galois Fourier transforms from the transformed codeword.
Thus, the computational complexity of ETD is $O(e(n-k)k)$, which is much lower than that of traditional encoding.
The memory consumption of ETD is $O(e(n-k)k)$ symbols, which is the same as that of traditional encoding.
For example, we show that the computational complexity of ETD of a 64-ary (4095,2160) QC code is only $1.59\%$ of that of traditional encoding.

Since binary QC codes are used in many applications, we carry out a detailed study of ETD of binary QC codes.
For any binary QC code, its transformed generator matrix satisfies the conjugacy constraint \cite{B83, HLG_12}, but the transmitted codeword from ETD usually is not binary.
Thus, it costs several bits to transmit a code symbol, resulting in lower code rate.
To make the transmitted codeword binary, we devise a simple and fast mapping to post-process the transformed codeword so that the new transformed codeword still satisfies the conjugacy constraint.
The post-processing step consisting of a mapping with bases of subfields is the key in the binary case.
The other steps of ETD of binary QC codes are the same as those for the non-binary case.
Furthermore, if we take advantages of the conjugacy constraint on the transformed generator matrix, the computational complexity of ETD can be reduced to $O(e(n-k)k\log_2 e)$.
Its memory consumption is $O(e(n-k)k)$, the same as in traditional one.
We show that the computation complexities of ETD of the binary (4095, 2160) and (8176, 7154) QC codes are respectively
$9.52\%$ and $1.77\%$ of those of traditional encoding.

ETD readily applies to QC-LDPC codes with full-rank parity-check matrices.
However, there are many QC-LDPC codes, especially algebraic LDPC codes \cite{TLZLG06,LZTCLG07,HLG_10}, \cite{Chen_Lan_Djurdjevic_Lin_04}-\cite{Ga62}, whose parity-check matrices are abundant in redundant rows, i.e., rank deficient.
By carefully constructing the transformed generator matrix according to the rank of the diagonal matrices on the transformed parity-check matrix, we show that our proposed ETD algorithm works for such QC-LDPC (or QC) codes as well.

The rest of this paper is organized as follows.
Section II introduces QC codes and the matrix transformation.
In Section III, we present encoding of QC codes in the transform domain and study its computational complexity and memory consumption. Section IV focuses on ETD of binary QC codes and its simplification due to the conjugacy constraint.
The key step in the binary case that involves construction of a post-processing mapping using subfield bases is derived.
Section V is concerned with QC-LDPC codes, especially whose parity-check matrices are rank deficient.
Section VI concludes the paper.

\section{Introduction of Matrix Transformation}

   In this section, we briefly describe QC codes and matrix transformations of their generator matrices and parity-check matrices and refer readers to \cite{HLG_10, HLG_12} for more details.

   Let $\bW  = [w_{ij}]$, $0 \leq  i, j < e$, be an $e\times e$ circulant matrix over GF($q$), i.e., if every row is a  \emph{cyclic-shift} (one place to right) of the row above it, including end-around. Then, we write $\bW  = circ(\bw)$, where $\bw$ is its top row, called the \emph{generator} of $\bW $ \cite{Tan88}.  A binary ($en$,$ek$) QC code of length $en$ is given by a $k\times n$ array $\bG$ of $e\times e$ \emph{circulant} matrices (or simply circulants), named the generator matrix.
    \begin{equation}\label{eq:G}
\begin{array}{lll}
             \bG &=&
\left[\begin{array}{llll}
    \bW_{0,0}  &     \bW_{0,1} &  ... &     \bW_{0,n-1} \\
         \bW_{1,0} &   \bW_{1,1} &  ...&     \bW_{1,n-1}   \\
          \vdots& &\ddots & \vdots\\
                                       \bW_{k-1,0} &    \bW_{k-1,1}&  ...&  \bW_{k-1,n-1}
\end{array}\right],
                          \end{array}
\end{equation}
   or the null space of an $(n-k)\times n$ array $\bH=[\bA_{i,j}]$, parity-check matrix,  of circulants of the same size, where $\bG \cdot \bH^{\sf T} ={\bf 0}$
and $\bA_{i,j}=circ(\ba_{i,j})$ is a circulant of length $e$.

Consider a finite field of $2^r$ elements GF($2^r$), i.e., $q=2^r$. Let $\alpha$  be an element of GF($2^r$) with order $e$ and $\bw$ be an $e$-tuple over GF($2^r$), where $e$ is a factor of $2^r-1$. Here we consider only the case $e=2^r-1$. The other cases are similar.  The  Galois Fourier transform of the $e$-tuple $\bw$ \cite{B83} over GF($2^r$), denoted by ${\cal  F}[\bw]$,  is given by the $e$-tuple $\bd$ over GF($2^r$) whose $t$-th component, $d_t$, for $0 \leq  t < e$, is given by $d_t = w_0 + w_1\alpha ^{-t} + w_2\alpha ^{-2t} + \cdots + w_{e-1}\alpha ^{-(e-1)t} $. The vector $\bw$, which is called the  \emph{inverse Fourier transform} of the vector $\bd$, denoted by $\bw = {\cal {\cal F}}^{-1}[\bd]$, can be reconstructed from 
$w_l = d_0 + d_1\alpha ^{l} + d_2\alpha ^{2l} + \cdots + d_{e-1}\alpha ^{(e-1)l}$, for $ 0 \leq  l < e$.

Consider an $e\times e$ circulant $\bW= circ(\bw)$. Define two $e\times e$ matrices over GF($2^r$) as follows: $\bV = [\alpha ^{-ij}]$ and $\bV ^{-1} = [\alpha ^{ij}]$,  $0 \leq  i, j <e$. Both matrices, $\bV $ and $\bV ^{-1}$, known as  \emph{Vandermonde} matrices \cite{B83} \cite{RO06}, are non-singular. Moreover, $\bV ^{-1}$ is the inverse of $\bV  $ and vice versa. Taking the matrix product $\bV^{-1}\bW\bV$, we obtain the following $e\times e$ diagonal matrix over GF($2^r$):
\begin{equation}
           {\bW}^{\cal {\cal F}} =  \bV^{-1}\bW\bV  = diag(d_0, d_1, ..., d_{e-1}),
\end{equation}
where the diagonal vector $(d_0, d_1, . . . , d_{e-1})$ is the Fourier transform of the  generator  $\bw$ of $\bW $.  The diagonal matrix $\bW ^{\cal {\cal F}} = \bV^{-1}\bW\bV$ is referred to as the Fourier transform of the circulant $\bW $. If $\bw$ is an $e$-tuple over GF(2), the components must satisfy the following constraint  \cite{B83}
\begin{equation}\label{eq:cc_vct}
                                              d_{(2t)_e} = d_t ^2
\end{equation}
for $0 \leq  t < e$, where $(2t)_e$ denotes the nonnegative integer less than $e $ and is congruent to $2t \mbox{ modulo } e$.  This condition is known as the  \emph{conjugacy constraint}, which is the key constraint of the binary case in Section IV.

   Let $k$ and $n $ be two positive integers. Let $\bG  = [\bW_{i,j}]$, $0 \leq  i < k$, $0 \leq  j < n$, be an $k\times n$ array of $e\times e$ circulants $\bW_{i,j}$ over GF($2^r$), where $\bw_{i,j}$ is the generator of the circulant $\bW_{i,j}$. Next, we define $\Omega (k)$ as a $k\times k$ diagonal array of $\bV $'s and $\Omega ^{-1}(n)$ as an $n\times n$ diagonal array of $\bV ^{-1}$'s,
\begin{equation}
\begin{array}{ccc}
                         \Omega (k) &=& diag( \underbrace{\bV  , . . . , \bV}_{k}  ),   \\
            \end{array}         \end{equation}
\begin{equation}
\begin{array}{ccc}
                         \Omega ^{-1}(n) &=& diag(\underbrace{\bV ^{-1},  . . . , \bV ^{-1}}_{n}). \\
              \end{array}
\end{equation}
Then the Fourier transform of $\bG$  is given as $\bG^{\cal {\cal F}} = \Omega^{-1}(k) \bG \Omega(n)=    [\bW_{i,j}^{\cal F}]$,
where $\bW_{i,j}^{\cal F} = \bV^{-1}\bW_{i,j}\bV$ is an $e\times e$ diagonal matrix with diagonal vector $(d_{i,j,0}, d_{i,j,1}, . . . , d_{i,j,e-1})$, which is the Fourier transform of the generator $\bw_{i,j}$ of $\bW_{i,j}$.

Define the following index sequences: for $0 \leq  i, j < e$, $\pi _{row,i} = [i, e+i, ..., (k-1)e+i]$ and $\pi _{col,j} = [j, e+j, ..., (n-1)e+j]$.
Let $\pi_{ row} = [\pi _{row,0}, \pi_{ row,1}, . . . , \pi_{ row,e-1}]$ and $\pi _{col} = [\pi _{col,0}, \pi_{ col,1}, . . . , \pi_{ col, e-1}]$.
Then $\pi_{row}$ gives a permutation of the indices of the rows of $\bG ^{\cal F}$ while $\pi _{col}$ represents a permutation of the columns of $\bG ^{\cal F}$. Their reverse permutations are denoted by $\pi^{-1}_{row}$ and $\pi^{-1}_{col}$, respectively. We define the permutation $\pi$ that performs both row and column permutations. Its reverse permutation is denoted by $\pi^{-1}$.  By the permutation $\pi$, $\bG^{\cal F}$ results in the following $e\times e$ diagonal array of $k \times n$ matrices over GF($2^r$),
\begin{equation}\label{eq:mt}
\begin{array}{ccl}
                 \bG ^{{\cal F},\pi } &=& diag(\bD_0, \bD_1, . . . , \bD_{e-1}).
\end{array}
\end{equation}
   The transformation from $\bG$  to $\bG ^{{\cal F},\pi }$ through $\bG ^{\cal F}$ is reversible. The reverse process is called the inverse matrix transformation, denoted by $\{{\cal F}^{-1}, \pi^{-1}\}$.

 If  the array $\bG$  of circulants and zero matrices (ZM) is over $\mbox{GF(2)}\subseteq \mbox{GF(}2^r\mbox{)}$, the matrices on the main diagonal of the array $\bG ^{{\cal F},\pi}$ satisfy the conjugacy constraint \cite{B83},
\begin{equation}\label{cc1}
                                         \bD_{(2t)_e} = \bD_t^{\circ 2},
\end{equation}
i.e., the entry at location $(i,j)$ of $\bD_{(2t)_e}$ is the square of the entry at location $(i,j)$ of $\bD_t$.  We call the matrix $\bD_{(2t)_e}$  a \emph{conjugate matrix} of $\bD_t$. Following the definition of conjugate matrix, we can group all the matrices on the main diagonal $\bD_i$, $i=0, 1, \ldots, e-1$, into conjugacy classes.  Let $\lambda$ be the number of distinct conjugacy classes and $\Psi_0, \Psi_1, \ldots, \Psi_{\lambda-1}$ represent these classes, with
\begin{equation} \label{eq:c_psi}
\Psi_i =\{\bD_{t_i},\bD_{(2t_i)_e},...,\bD_{(2^{\eta_i-1}t_i)_e}\}= \{\bD_{t_i}, \bD_{t_i}^{\circ 2}, . . . , \bD_{t_i}^{\circ 2^{\eta_i-1}}\},
\end{equation}
 where $\eta_i$ is the number of matrices in the conjugacy class $\Psi_i$, i.e., $\eta_i$ is the smallest nonnegative integer such that $(2^{\eta_i}t_i)_e=t_i$, and $t_i$ is the smallest number in the subscripts of the conjugate matrices in $\Psi_i$. The member matrix $\bD_{t_i}$ is called the representative of the conjugacy class $\Psi_i$.

\section{Encoding of QC Codes in the Transformed Domain}
In this section, we present our proposed low-complexity ETD algorithm of QC codes. First, we derive ETD based on orthogonality \cite{HLG_12} of the transformed parity-check matrix and the transformed generator matrix. Then, we compare the computational complexity and memory consumption of ETD with those of traditional encoding.

Consider an ($en$, $ek$) QC code $\cal C$ over GF($2^r$) defined by the $ek\times en$ generator matrix $\bG$ in (\ref{eq:G}), which consists of circulants of size $e$. Suppose that $\alpha$ is a primitive element in GF($2^r$), i.e., $e=q-1=2^r-1$. By the matrix transformation (\ref{eq:mt}), the generator matrix results in the transformed generator matrix $\bG^{{\cal F}, \pi}$ over GF($2^r$). It can be employed to encode a message vector $\bm$ of $ek$ bits into a transformed codeword $\bc^{{\cal F}, \pi}$ of length $en$ bits.
Since
$$\bG\cdot\bH^{\sf T}= \Omega (k) \cdot  \Omega ^{-1}(k) \cdot \bG \cdot  \Omega(n)  \cdot \Omega^{-1}  (n)  \cdot\bH^{\sf T}=0,$$
where $\bH$ is the parity-check matrix of $\cal C$, then
\begin{equation}
\Omega(k)  \cdot \bG^{\cal F} \cdot     \Omega^{-1}  (n)\cdot\bH^{\sf T} =0.
\end{equation}
Because $\Omega (k) $ is a nonsingular matrix,
\begin{equation} \label{eq:encoding}
\bG^{\cal F} \cdot     \Omega^{-1}  (n)\cdot\bH^{\sf T} =0.
\end{equation}
Furthermore, permuting the rows of $\bG^{\cal F}$ by $\pi_{row}$ results in  $\bG^{\mathcal{F} , \pi_{row}}$, then\\
\begin{equation} \label{eq:row}
\bG^{\mathcal{F} , \pi_{row}} \cdot     \Omega^{-1}  (n)\cdot\bH^{\sf T} =0.
\end{equation}
Suppose that $\bm$ is a vector of length $ek$ over GF($2^r$). We define the permutation $\pi$ of a vector performs only the column permutations. Its reverse permutation is denoted by $\pi^{-1}$. Then the vector $\bc$, defined as $\bc\triangleq \{\bc^{{\cal F}, \pi}\}^{ \pi^{-1} } \cdot \Omega ^{-1} (n)= \bc^{\cal F} \cdot \Omega^{-1} (n)$ with
\begin{equation} \label{eq:t_c}
\bc^{{\cal F},\pi}= \bm \cdot \bG^{{\cal F}, \pi}
\end{equation}
 is named as \emph{transformed codeword}. The vector $\bc$ is a codeword of the QC code $\cal C$ from (\ref{eq:row}) and (\ref{eq:t_c}), because
\begin{equation}
\begin{array}{lll}
\bc  \cdot\bH^{\sf T}&=& \{\bc^{\cal F, \pi}\}^{ \pi^{-1} } \cdot \Omega^{-1}  (n) \cdot \bH^{\sf T}  \\
&=& \{\bm \cdot \bG^{{\cal F}, \pi} \}^{ \pi^{-1} }  \cdot \Omega^{-1}  (n)\cdot\bH^{\sf T}   \\
   &=&   \bm \cdot \bG^{\mathcal{F} , \pi_{row}} \cdot \Omega ^{-1} (n)\cdot\bH^{\sf T}       \\
   &=&  {\bf 0}.
\end{array}
\end{equation}
Moreover, since $\{ \bG^{{\cal F}, \pi} \}^{ \pi^{-1} }  \cdot \Omega ^{-1}(n)$ has full rank, the mapping from $\bm$ to $\bc$ are one-to-one. As a result, the above equations can be viewed as an encoding in the transform domain. ETD is depicted in Fig. 1 and its steps are summarized as follows:

\begin{algorithm}
\caption{The Encoding of QC Codes in the Transform Domain}
\label{alg:framwork}
\begin{algorithmic}
\REQUIRE ~~\\
  The message $\mathbf{m}$ of $ek$ symbols;\\
  The $ek\times en$ transformed generator matrix, $\mathbf{G}^{\mathcal{F},\pi}$;\\

 \ENSURE ~~\\
  The binary transmitted codeword $\mathbf{c}$ of $en$ symbols;\\
\hspace{-0.1in}\textbf{Steps:}     ~\\

\STATE 1) The message $\bm$ is encoded into the transformed codeword $\bc^{\cal F}$ by the transformed generator matrix $\bG^{{\cal F}, \pi}$
$$\bc^{\cal F}= \{\bm \cdot \bG^{{\cal F}, \pi} \}^{ \pi^{-1} }.  $$

\STATE 2) The transmitted codeword $\bc$ is  obtained by the inverse Galois Fourier transform from ${\bc}^{\cal F}$'s,
\begin{equation}
c_i   = \sum\limits^{e-1}_{j=0}    {c}^{\mathcal F}_{i,j}  \alpha^{j}.
\end{equation}
 \end{algorithmic}
\end{algorithm}

\begin{figure}
\centering
\includegraphics[width=3in]{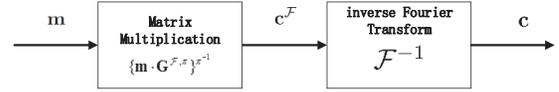}
\caption{The block diagram of the encoding of QC codes in the transformed domain.}
\end{figure}
According the above encoding steps, it is straightforward to give the computational complexity of ETD. The first step of encoding involves $e(n-k)k$ Galois additions and $e(n-k)(k-1)$ Galois multiplications.  Since the computational complexity of the Fourier transform of length $e$ implemented by the Galois fast Fourier transform (GFFT) \cite{KT1990}  is less than $e\log_2 e$ Galois operations, the computational complexity of the second step, which needs $n$ GFFT, is less than $ne\log_2 e$ Galois operations. Because $\log_2 e$ is much smaller than $(n-k)$ or $k$, the computational complexity of  ETD is about $O(ek(n-k))$.

In terms of  the memory consumption of ETD, it is clear that the biggest expenses on memory is for the transformed matrix $\bG^{{\cal F}, \pi}$. Since it is a diagonal matrix, we only have to store $e$ matrices, each of which requires the storage of ($e(n-k)k$) Galois symbols if it is systematic.   In Table I, we compare the computational complexity and the memory consumption of ETD of QC codes with those of traditional encoding in terms of symbols. It is seen that the computational complexity of ETD is about ${\cal R}= e$ times lower than that of traditional encoding.

  \begin{table*}\caption{Complexity of traditional encoding and transformed encoding}\centering
  \begin{tabular}{c||c|c}
  \hline
  &Traditional Encoding &Transformed Encoding\\
  \hline
  \hline
Computational Complexity &$O(e^2 k (n-k))$&  $O(ek((n-k))$\\
\hline
  Memory Consumption                      & $ek(n-k)$  &  $ ek(n-k)$ \\
 \hline
  \end{tabular}
  \end{table*}

\begin{example}
Consider a $64$-ary (4095, 2160) QC code with circulant size 63, i.e., $n=4095$, $k=2160$, $n-k=1953$, $e=63$ and $r=6$. Thus, ${\cal R}=63$. In other words, the computational complexity of ETD is only $1.59\%$ of that of traditional encoding.
\end{example}

\section{Encoding of Binary QC Codes in the Transformed Domain}
In Section III, we derived ETD of QC codes over GF($2^r$). If the QC code is binary, both the message vector $\bm$ and the codeword vector $\bc$ are binary. However, there is no guarantees that the codeword $\bc$ from ETD is binary.
If there exist non-binary symbols in $\bc$, many more bits are needed to transmit it, resulting in code rate reduction.

In this section, we propose a fast and simple mapping using subfield bases on the message vector $\bm$ to make sure that the transmitted codeword $\bc$ is also binary. In addition, to reduce the computational complexity of the first step of ETD,  a post-processing step on the transformed codeword $\bc^{{\cal F},\pi}$, which is equivalent to pre-processing of the message vector $\bm$, will be presented.

Recall the conjugacy constraint (\ref{eq:cc_vct}) for a binary vector. For the transformed codeword $\bc^{\cal F}$ or $\bc^{{\cal F},\pi}$, it means
\begin{equation}\label{eq:conj_vct_1}
c^{{\cal F}}_{ej+(2i)_{e} }=(c^{{\cal F}}_{ej+i})^2 ,
\end{equation}
or \begin{equation}\label{eq:conj_vct_2}
c^{{\cal F},\pi }_{n(2i)_{e} + j }=(c^{{\cal F},\pi}_{ni+j})^2 ,
\end{equation}
 for $0\leq i<e$ and $0\leq j<n$. For the sake of simplicity, we denote the transformed codeword and the codeword by $n$ blocks of length $e$, $\bc^{\cal F}=[\bc^{\cal F}_j]$ and $\bc=[\bc_j]$, respectively, where $c^{\cal F}_{j,i}= c^{\cal F}_{ej+i}$ and $c_{j,i}=c_{ej+i}$. It is clear that $\bc_j$ is the inverse Galois Fourier transform of $\bc^{\cal F}_j$. Similarly, we denote $\bc^{{\cal F},\pi}$ by $e$ blocks of length $n$, $\bc^{{\cal F},\pi}=[\bc^{{\cal F},\pi}_i]$, where $c^{{\cal F},\pi}_i= c^{{\cal F},\pi}_{in+j}$.
 Then (\ref{eq:conj_vct_1}) and (\ref{eq:conj_vct_2}) can be rewritten as
 \begin{equation}\label{eq:conj_block}
c^{\cal F}_{j, (2i)_{e}}=(c^{\cal F}_{j, i})^2,
\end{equation}
and
 \begin{equation}\label{eq:conj_block2}
c^{{\cal F},\pi}_{(2i)_{e},j}=(c^{{\cal F},\pi}_{i, j})^2,
\end{equation}
 respectively.

We now show that if the message vector $\bm$ is pre-processed by bases of subfields, then (\ref{eq:conj_block}) or (\ref{eq:conj_block2}) is satisfied and hence the codeword $\bc$ is binary.
Again for the sake of simplicity, we denote the message vector $\bm$ and the pre-processed message vector $\hat{\bm}$  by $e$ blocks of size $k$, $\bm=[\bm_i]$ and $\hat{\bm} = [\hat{\bm}_i]$, respectively, where $m_{i,j}=m_{ik+j}$ and $\hat{m}_{i,j}=\hat{m}_{ik+j}$, $i=0,1,\ldots,e-1$ and $j=0,1,\ldots, k-1$.

Recall the definition of the $i$-th conjugacy class $\Psi_i=\{\bD_{t_i}, \bD_{t_i}^{\circ 2}, . . . , \bD_{t_i}^{\circ 2^{\eta_i-1}} \}$ in (\ref{eq:c_psi}) with size $\eta_i$. Suppose that $\alpha$ is a primitive element, then a basis $\beta_{i,0}, \beta_{i,1}, \ldots, \beta_{i, \eta_i-1}$ spans the subfield GF($2^{\eta_i}$) of GF($2^r$) whose element's $2^{\eta_i}$ power all equals to itself. Thus, if $\gamma$ is an element of GF($2^{\eta_i}$), then  $\gamma^{2^{\eta_i}} =\gamma$ and $\gamma= \sum\limits^{\eta_i-1}_{l=0} u_l\beta_{i,l}$, where $u_l$ is in the ground field GF(2). If $\eta_i=1$, then the subfield only has two elements, 0 and 1. If $\eta_i=r$, then the subfield is the field GF($2^r$).

Using the bases $\beta_{i,l}$'s, we map message $\bm$ to its pre-processed version $\hat{\bm}$ via
\begin{equation}\label{eq:m}
\hat{m}_{(2^\mu t_i)_e, j}  =  \left(\sum\limits^{\eta_i-1}_{l=0 }  \beta_{i,l} m_{(2^l t_i)_e, j}   \right)^{2^ \mu}.
\end{equation}
First, the mapping from $\bm$ to $\hat{\bm}$ is one-to-one since $\beta_{i,l}$'s are linearly independent over GF(2).
Second,  $\hat{m}_{(2^\mu t_i)_e, j} = \hat{m}_{(t_i)_e, j}^{2^\mu}$.
A detailed proof of this fact is given the appendix.
The following theorem shows that the transformed codeword
\begin{equation}
\hat{\bc}^{\cal F}= \{\hat{\bm} \cdot \bG^{{\cal F}, \pi} \}^{ \pi^{-1} }
\end{equation}
 encoded from the pre-processed message $\hat{\bm}$ satisfies the conjugacy constraint.

\begin{theorem}
The transformed codeword  $\hat{\bc}^{\cal F}= \{\hat{\bm} \cdot \bG^{{\cal F}, \pi} \}^{ \pi^{-1} } $ encoded from the pre-processed message $\hat{\bm}$ satisfies the conjugacy constraint $\hat{c}^{\cal F}_{j, (2i)_{e}}=(\hat{c}^{\cal F}_{j, i})^2$ in (\ref{eq:conj_vct_1}) or $\hat{c}^{{\cal F},\pi }_{(2i)_e,j }=(\hat{c}^{{\cal F},\pi}_{i,j})^2$ in (\ref{eq:conj_vct_2}).
\end{theorem}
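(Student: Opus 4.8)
The plan is to carry out everything in the permuted transform domain, where $\bG^{{\cal F},\pi}=diag(\bD_0,\ldots,\bD_{e-1})$ is block-diagonal by (\ref{eq:mt}), and to reduce the claim to two conjugacy constraints that are already in hand: one on the pre-processed message $\hat{\bm}$ and the other on the diagonal blocks $\bD_t$. Once these are combined, the only field-theoretic input needed is that squaring is additive in characteristic two. I will prove the form (\ref{eq:conj_block2}); the form (\ref{eq:conj_block}) then follows immediately because $\pi$ merely permutes coordinates and leaves the field values untouched.

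First I would record the message-side conjugacy constraint. Using the appendix identity $\hat{m}_{(2^\mu t_i)_e,j}=\hat{m}_{t_i,j}^{2^\mu}$, note that an arbitrary index $i$ lies in some conjugacy class, say $i=(2^\mu t_{i'})_e$. Then
$$\hat{m}_{(2i)_e,j}=\hat{m}_{(2^{\mu+1}t_{i'})_e,j}=\hat{m}_{t_{i'},j}^{2^{\mu+1}}=\left(\hat{m}_{t_{i'},j}^{2^\mu}\right)^{2}=\left(\hat{m}_{(2^\mu t_{i'})_e,j}\right)^{2}=(\hat{m}_{i,j})^{2},$$
so $\hat{\bm}$ already obeys exactly the block-wise conjugacy constraint that we want $\hat{\bc}^{{\cal F},\pi}$ to inherit.

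Next I would write the encoding block-wise. Because $\bG^{{\cal F},\pi}$ is block-diagonal, the $i$-th length-$n$ block of $\hat{\bc}^{{\cal F},\pi}=\hat{\bm}\cdot\bG^{{\cal F},\pi}$ couples $\hat{\bm}_i$ only with $\bD_i$, giving $\hat{c}^{{\cal F},\pi}_{i,j}=\sum_{l=0}^{k-1}\hat{m}_{i,l}(\bD_i)_{l,j}$. Computing the $(2i)_e$-th block and substituting the message constraint $\hat{m}_{(2i)_e,l}=(\hat{m}_{i,l})^2$ together with the generator constraint $\bD_{(2i)_e}=\bD_i^{\circ 2}$ of (\ref{cc1}), i.e.\ $(\bD_{(2i)_e})_{l,j}=((\bD_i)_{l,j})^2$ entrywise, yields
$$\hat{c}^{{\cal F},\pi}_{(2i)_e,j}=\sum_{l=0}^{k-1}(\hat{m}_{i,l})^2\,((\bD_i)_{l,j})^2=\sum_{l=0}^{k-1}(\hat{m}_{i,l}(\bD_i)_{l,j})^2.$$

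The crux is the final step: since GF($2^r$) has characteristic two, the Frobenius map $x\mapsto x^2$ is an additive homomorphism, so the square passes through the sum,
$$\sum_{l=0}^{k-1}(\hat{m}_{i,l}(\bD_i)_{l,j})^2=\left(\sum_{l=0}^{k-1}\hat{m}_{i,l}(\bD_i)_{l,j}\right)^{2}=(\hat{c}^{{\cal F},\pi}_{i,j})^2,$$
which is precisely (\ref{eq:conj_block2}); applying $\pi^{-1}$ then delivers (\ref{eq:conj_block}) as well. I do not expect a genuine obstacle here: the entire argument is the meeting of the two pre-established conjugacy constraints with the ``freshman's dream'' in characteristic two. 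The only points demanding care are the index bookkeeping between the ${\cal F}$ and $\{{\cal F},\pi\}$ orderings and the verification that the message constraint holds for every index and not merely for the class representatives $t_i$ --- which is exactly what the appendix identity supplies.
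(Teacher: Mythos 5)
Your proof is correct and follows essentially the same route as the paper's: write the encoding block-wise using the diagonal structure of $\bG^{{\cal F},\pi}$, substitute the message-side conjugacy constraint $\hat{m}_{(2i)_e,s}=(\hat{m}_{i,s})^2$ (from the appendix identity) and the generator-side constraint $\bD_{(2i)_e}=\bD_i^{\circ 2}$ from (\ref{cc1}), then pull the square out of the sum by the characteristic-two Frobenius property. Your treatment is in fact slightly more careful than the paper's, since you explicitly verify that the message constraint holds for every index $i$ (not just class representatives) and note the wrap-around within each conjugacy class, steps the paper leaves implicit.
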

\begin{proof}
From the definition of $\hat{\bc}^{{\cal F}, \pi}$, its $(ni+j)$-th symbol for $i=0,1,\ldots, e-1$ and $0\leq j<n$ is
\begin{eqnarray}
\hat{c}^{{\cal F}, \pi}_{i, j} &=& \sum\limits^{k-1}_{s=0}\hat{m}_{i, s}  D_{ i, s, j }.
\end{eqnarray}
From the definition of $\hat{\bm}$ (\ref{eq:m}) and the conjugacy constraint on $\bD_i$ in (\ref{cc1}),  its $(n(2i)_e+j)$-th symbol is
\begin{equation}
\begin{array}{lll}
\hat{c}^{{\cal F},\pi}_{t'}=\hat{c}^{{\cal F}, \pi}_{(2i)_e, j} &=& \sum\limits^{k-1}_{s=0}\hat{m}_{(2i)_e, s}  D_{ (2i)_e, s, j },\\
     &=& \sum\limits^{k-1}_{s=0} \hat{m}_{i, s}^2  D_{i, s, j}^2,\\
     &=& \left(\sum\limits^{k-1}_{s=0}  \hat{m}_{i, s} D_{i, s, j}\right)^2= (\hat c^{{\cal F},\pi}_{t})^2.
\end{array}
\end{equation}
\end{proof}
Direct computation of (20) involves $k$ Galois multiplications and $k-1$ Galois additions over GF($2^r$), since both $\hat{\bm}$ and $\bD_i$ are non-binary symbols over GF($2^r$).  Thus, we rewrite (21) as
\begin{equation}\label{eq:c}
\begin{array}{lll}
\hat{c}^{{\cal F},\pi}_{{(2^{\mu}t_i)}_e,j}&=&\sum\limits^{k-1}_{s=0} \hat{m}_{{(2^{\mu} t_i)}_e, s} D_{{(2^\mu t_i)}_e, s, j},\\
&=& \sum\limits^{k-1}_{s=0}   \left(\sum\limits^{\eta_i-1}_{l=0 }  \beta_{i,l} m_{{(2^l t_i)}_e, s}   \right)^{2^\mu} D_{t_i, s, j}^{2^\mu},\\
&=&   \sum\limits^{\eta_i-1}_{l=0 }  \beta_{i,l}^{2^\mu}   \left(\sum\limits^{k-1}_{s=0} m_{(2^l t_i)_e, s}^{2^\mu}    D_{t_i, s, j}^{2^\mu}\right),\\
&=&   \left(\sum\limits^{\eta_i-1}_{l=0 }  \beta_{i,l}  c^{{\cal F},\pi}_{(2^l t_i)_e,j}\right)^{2^\mu}.
\end{array}
\end{equation}
Equation (\ref{eq:c}) shows that the mapping on message $\bm$ (\ref{eq:m}) and the mapping on ${\bc}^{{\cal F},\pi}$  (\ref{eq:c}) result in the same $\hat{\bc}^{{\cal F},\pi}$, which satisfies the conjugacy constraint.

However, the mapping on ${\bc}^{{\cal F},\pi}$ (\ref{eq:c}) involves much less Galois multiplications than the one on message $\bm$ (\ref{eq:m}).
According to (\ref{eq:c}), computation of $\hat{c}^{{\cal F}, \pi}_{{t_i},j}$ is carried out in two steps.
In the first step, the summation of $c^{{\cal F},\pi}_{(2^l t_i)_e,j}=\sum\limits^{k-1}_{s=0} m_{(2^l t_i)_e, s}    D_{{(2^\mu t_i)}_e, s, j}$ is calculated, which only involves $k-1$ additions, since $\bm$ is a binary vector; in the second step, $\hat{c}^{{\cal F},\pi}_{ t_i,j}$ is calculated, which involves $\eta_i$ multiplications and $\eta_i-1$ additions.
The other codeword symbol $\hat{c}^{{\cal F},\pi}_{(2^\mu t_i)_e, j}$ in the conjugacy class of $j$ can be simply calculated from $(\hat{c}^{{\cal F},\pi}_{t_i,j})^{2^\mu}$.

Before we study the computational complexity of our proposed ETD algorithm, we summarize its encoding steps.
For the sake of implementational simplicity, we assume that the permutation $ \pi^{-1} $ operation is included
in the initialization stage.
\begin{algorithm}
\caption{The Encoding of Binary QC Codes in the Transform Domain}
\label{alg:framwork_bin}
\begin{algorithmic}
\REQUIRE ~~\\
  The message $\mathbf{m}$ of $ek$ bits;\\
  The $ek\times en$ transformed generator matrix $\mathbf{G}^{\mathcal{F},\pi}$;\\

 \ENSURE ~~\\
  The binary transmitted codeword $\mathbf{c}$ of $en$ bits;\\
\hspace{-0.1in}\textbf{Steps:}     ~\\

\STATE 1) The message $\mathbf{m}$  is encoded into the transformed codeword $\mathbf{c}^{\mathcal{F}}$ by the transformed generator matrix $\mathbf{G}^{\mathcal F, \pi}$
  \[
  \mathbf{c}^{\mathcal F}=\{ \mathbf{m}\cdot {\mathbf G}^{\mathcal{F},\pi}\}^{\pi^{-1}}
  \]
\STATE 2) The transformed codeword $\mathbf{c}^{\mathcal F}$ is mapped into the conjugacy constraint satisfied codeword $\hat {\mathbf{c}}^{\mathcal F}$,
  \[
  \hat{c}_{j,(t_i2^\mu)_e}^{\mathcal{F}}=\bigg(\sum_{l=0}^{\eta_i-1}{\beta_{i,l}c^{\cal F}_{j,(t_i2^l)_e}}\bigg )^{2^\mu}.
  \]
\STATE 3) The binary transmitted codeword $\mathbf{c}$ is obtained by the inverse Galois Fourier transform from $\hat {\mathbf{c}}^{\mathcal F}$'s,
  \[
  c_i=\sum_{j=0}^{e-1}{\hat c_{i,j}^{\cal F}\alpha^{j}}.
  \]
 \end{algorithmic}
\end{algorithm}

\begin{figure*}
\centering
\includegraphics[width=5.5in]{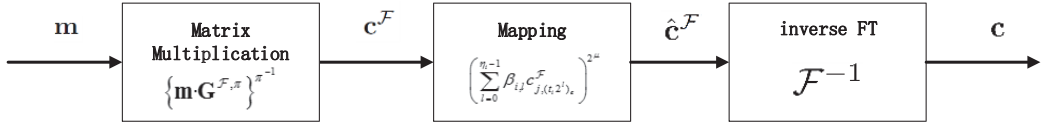}
\caption{The block diagram of the encoding of QC codes in the transformed domain.}
\end{figure*}

We present the computational complexity of the proposed ETD algorithm step by step. Since $\bm$ is binary, the first step of encoding involves only $ek(n-k)$ Galois additions.  In the second step, each  $\hat c^{\cal F}_{j,t_i}$ needs $\eta_i$ Galois multiplications and $\eta_i-1$ Galois additions; the computational complexity of this step is thus about $ ne$ Galois multiplications and $ ne$ Galois additions. Since the GFFT over GF($2^r$) is about $e\log_2 e$ Galois operations, so the complexity of the third step, which requires $n$ times GFFT, is $O(ne\log_2 e)$.

In terms of the memory consumption, it is clear that most memory is spent on storing the transform matrix $\bG^{{\cal F}, \pi}$. Since it is a diagonal matrix, we only need to store $e$ matrices, each of which requires to store ($k(n-k)$) Galois symbols if it is systematic. Moreover, considering the conjugacy constraint (\ref{cc1}), we only need to store $\lambda$  representative matrices. Each symbol in the $i$-th  representative matrix cost $\eta_i$ bits to store.
Thus, the overall memory consumption of ETD is $\sum\limits^{\lambda}_{i=0} \eta_ik(n-k)=ek(n-k)$, which is the same as
that of traditional encoding.

 Considering that each Galois addition costs $r$ bit operations, each Galois multiplication costs $r^2$ bit operations, and each Galois symbol costs $r$ bits memory, we compare in Table II the computational complexity of ETD of binary QC codes with that of traditional encoding in terms of bits.
  \begin{table*}\caption{Complexity of traditional encoding and transformed encoding}\centering
  \begin{tabular}{c||c|c|c}
  \hline
  &Traditional Encoding & \multicolumn{2}{c}{Transformed Encoding}\\
  \hline
  \hline
  \multirow{3}*{Computational Complexity} &\multirow{3}*{$O(e^2k(n-k))$}& Step 1) &$ek(n-k)r$\\
  \cline{3-4}
                                          &             & Step 2) &$ne(r^2+r)$\\
   \cline{3-4}
                                          &             & Step 3) &$n r^2e\log_2 e$  \\
    \cline{3-4}
                                          &             & Overall &  $O(e(n-k)k\log_2 e)$  \\
   \hline
    \hline
  Memory Consumption                      & $ek(n-k)$  & \multicolumn{2}{c}{ $ek(n-k)$ }\\
 \hline
  \end{tabular}
  \end{table*}

  Since $r \approx \log_2 e$, Step 1) costs most computation and dominates the computational complexity, which is about $O(e(n-k)k\log_2 e)$. Thus the complexity of  the transformed encoding is about ${\cal R}= \frac{(n-k)ke^2}{ (n-k)ke\log_2 e} =\frac{e}{\log_2 e}$ times lower than the complexity of the traditional encoding.

\begin{example}
Consider a (4095, 2160) QC-LDPC with circulant size 63, i.e., $n=4095$, $k=2160$, $n-k=1953$, $e=63$ and $r=6$, we have, ${\cal R}=10.05$. In other words, the computational complexity of ETD is only $9.52\%$ of that of traditional encoding.
\end{example}

\begin{example}
Consider a (8176, 7154) QC-LDPC with circulant size 511, i.e., $n=8176$, $k=7154$, $n-k=1022$, $e=511$ and $r=9$, we have, ${\cal R}=56.78$. In other words, the computational complexity of ETD is only $1.77\%$ of that of traditional encoding.
\end{example}

\section{Transformed Encoding of QC-LDPC Codes}

An LDPC code usually is defined by either its Tanner graph or its parity-check matrix. In this paper, we describe an ($en$,$K$) QC-LDPC code by its parity-check matrix, which can be simply denoted by a sparse $(n-k)\times n$ array of $e\times e$ circulants

    \begin{equation}
\begin{array}{lll}
             \bH &=&
\left[\begin{array}{cccc}
    \bA_{0,0}  &     \bA_{0,1} &  ... &     \bA_{0,n-1} \\
         \bA_{1,0} &   \bA_{1,1} &  ...&     \bA_{1,n-1}   \\
          \vdots& &\ddots & \vdots\\
                                       \bA_{n-k-1,0} &    \bA_{n-k-1,1}&  ...&  \bA_{n-k-1,n-1}
\end{array}\right],
                          \end{array}
\end{equation}
where $K\geq ek$. The authors of \cite{LCZLF06} proposed to compute the generator matrix $\bG$ with quasi-cyclic structure from the parity-check matrix based on Gaussian elimination. Based on the matrix transformation (\ref{eq:mt}), we can have the transformed generator matrix $\bG^{{\cal F},\pi}$ from the generator matrix $\bG$. Alternatively, we can have $\bG^{{\cal F},\pi}$ from the null space of the transformed parity-check matrix $\bH^{{\cal F},\pi}$ \cite{HLG_12}. In the sequel, we separately treat ETD of QC-LDPC codes with full rank $\bH$ and rank deficient $\bH$.

\subsection{Encoding of QC-LDPC Codes with Full Rank Parity-Check Matrices}

Suppose that $\cal C$ is an ($en$,$K$) QC-LDPC code defined by its full rank parity-check matrix $\bH=[\bA_{i,j}]$. Similar to (\ref{eq:mt}), we have the transformed parity-check matrix $\bH^{{\cal F}, \pi}=diag[ \bB_0, \bB_1, \ldots, \bB_{e-1} ]$.  Then we can construct the corresponding transformed generator matrix $\bG^{{\cal F}, \pi}=diag[ \bD_0, \bD_1, \ldots, \bD_{e-1} ]$ (with details in \cite{HLG_12}) from   $\bH^{{\cal F},pi}$ such that
$$
\bG^{{\cal F}, \pi} \bH^{{\cal F}, \pi \sf T} = diag[\bD_0\bB_0^{\sf T}, \bD_1\bB_1^{\sf T}, \ldots, \bD_{e-1}\bB_{e-1}^{\sf T} ]=0.
$$

To facilitate encoding, the diagonal matrices $\bD_i$'s should satisfy the conjugacy constraint and be systematic. We thus propose to compute the systematic matrix $\bD_i$ from the matrix $\hat{\bB}_i$, which is systemized from $\bB_i$ by Gaussian elimination.
In other words, if $\hat{\bB}_i=[ \bP_i  \; \vdots\; \bI ]$, then $\bD_i= [ \bI \; \vdots \; \bP_i^{\sf T}]$ such that $\bD_i \hat{\bB}_i^{\sf T} =0$. Furthermore, $\bD_{(2^\mu i)_e}= (\bD_{i})^{2^\mu}$.

Suppose that $\bH^{{\cal F}, \pi}$ has full rank, i.e., $K=ek$. It is clear that $\hat{\bB}_i=[ \bP_i  \; \vdots\; \bI ]$ for each $i=0,1,\ldots,e-1$ has full rank and all $\hat{\bB}_i$'s are of the same size. Then all the $\bD_i$'s have the same size. Thus, ETD of QC-LDPC codes in this case is the same as before in Algorithms 1 and 2.

\subsection{Encoding of QC-LDPC Codes with Rank Deficient Parity-Check Matrices}

The parity-check matrices of many QC-LDPC codes, especially algebraic QC-LDPC codes \cite{TLZLG06,LZTCLG07, HLG_10,Chen_Lan_Djurdjevic_Lin_04,SZLG09,JingyuKang2010,Huang_Diao_Lin_A-G,M.C.}, are rank deficient, i.e., $K\neq ek$.
In some cases, more than half rows of their parity-check matrices are redundant.
As a result, there exist rank deficient diagonal matrices in their transformed parity-check matrices.
Consequently, their diagonal matrices $\bB_i$'s have different ranks, with $\rho_i=rank(\bB_i)\leq n-k$, $i=0,1,\ldots,e-1$, such that
$$
\bB_i=\left[  \begin{array}{c}
\bP_i \; \vdots\; \bI_{\rho_i \times \rho_i}\\
{\bf 0}\\
\end{array}
\right]_{(n-k) \times n}.
$$
Thus, the diagonal matrices $\bD_i$'s on the transformed generator matrices have different sizes and ranks,
with $\sigma_i =rank(\bD_i)= n-\rho_i $.
 The length of the message vector of the QC-LDPC code is $K=\sum\limits ^{e-1}_{i=0} \sigma_i$ and the transformed generator matrix $\bG^{{\cal F}, \pi} = [\bD_0, \bD_1, \ldots, \bD_{e-1}]$
 can be represented by
$$
\bD_i= \left[\begin{array}{c}
\bI_{\sigma_i \times \sigma_i}   \; \vdots\;  \bP_i^{\sf T} \\
\end{array}
\right]_{\sigma_{i}\times \sigma_i}.
$$
Thus, ETD of QC-LDPC codes with rank deficient parity-check matrices is similar to Algorithms 1 and 2 as well.
The only difference lies in the first step because different $\bD_i$'s have different ranks and sizes.
Thus, $\sigma_i$ symbols or bits are multiplied by $\bD_i$ in the first step in the non-binary or binary case.

\section{Concluding Remarks}
In this paper, we have proposed a low-complexity encoding algorithm for QC codes in the transform domain.
Its computational complexity is much lower than traditional encoding, for both non-binary and binary QC codes.
To further simplify ETD of binary QC codes, a post-processing step is devised to guarantee that the transmitted codeword is binary. In addition, it has been shown that ETD is applicable to both QC-LDPC codes with full rank parity-check matrices and QC-LDPC codes with rank deficient parity-check matrices.

We have only considered cases with $e=2^r-1$ and $q=2^r$ or $q=2$ in this paper. The derivations for other cases are similar.
It is worth mentioning that, unlike traditional encoding, the transmitted codeword generated by our ETD algorithm is not systematic. Thus, after the codeword is corrected by the decoder of QC codes, Fourier transforms are required to recover the message vector.
The complexity of these Fourier transforms is much lower than that of traditional decoding algorithms for most QC codes. 
Moreover, non-systematic codewords by nature have better secrecy than systematic ones.

\appendix

\begin{theorem}
Let $\eta_i$ be the size of $i$-th conjugacy class. The nonlinear mapping from a binary vector $\bz$ to a vector $\hat{\bz}$ over GF($2^r$),
\begin{equation}
\hat{z}_{(2^\mu t)_e}  =  (\sum\limits^{\eta_i-1}_{l=0 }  \beta_l z_{ (2^l t)_e}   )^{2^ \mu},
\end{equation}
is one-to-one correspondence, where $\{ \beta_{l} \}$ is a basis of the subfield GF($2^{\eta_i}$). Furthermore, $\hat{\bz}$ satisfies the conjugacy constraint.
\end{theorem}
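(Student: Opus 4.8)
The plan is to use the fact that the mapping acts \emph{independently} on each conjugacy class, so it suffices to analyze one class and then multiply the results. First I would fix the class with representative $t$ and size $\eta_i$, whose indices are $t,(2t)_e,\ldots,(2^{\eta_i-1}t)_e$, and collapse the $\eta_i$ input bits into a single field element $\gamma \triangleq \sum_{l=0}^{\eta_i-1}\beta_l\, z_{(2^l t)_e}$. Because $\{\beta_l\}$ is a basis of the subfield GF$(2^{\eta_i})$ and each $z_{(2^l t)_e}\in$ GF$(2)$, the element $\gamma$ lies in GF$(2^{\eta_i})$. The defining relation then reads simply $\hat z_{(2^\mu t)_e}=\gamma^{2^\mu}$ for $0\le\mu<\eta_i$, and in particular $\hat z_t=\gamma$.

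To establish the conjugacy constraint I would check $\hat z_{(2\cdot 2^\mu t)_e}=(\hat z_{(2^\mu t)_e})^2$ for every $\mu$. For $\mu<\eta_i-1$ this is immediate: $(2\cdot 2^\mu t)_e=(2^{\mu+1}t)_e$, so the left side is $\gamma^{2^{\mu+1}}=(\gamma^{2^\mu})^2$. The hard and genuinely essential case is the wrap-around $\mu=\eta_i-1$, where $(2^{\eta_i}t)_e=t$ by the minimality that defines $\eta_i$; the left side becomes $\hat z_t=\gamma$ while the right side is $(\gamma^{2^{\eta_i-1}})^2=\gamma^{2^{\eta_i}}$. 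The required identity $\gamma=\gamma^{2^{\eta_i}}$ is precisely the fixed-point property of GF$(2^{\eta_i})$ under the $2^{\eta_i}$-th power, which is exactly why the basis must be drawn from this subfield rather than from the larger GF$(2^r)$. This closure step is the crux of the whole argument.

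For the one-to-one correspondence I would argue class by class. On a single class the output is completely determined by $\gamma$ through the powers $\gamma^{2^\mu}$, and conversely $\gamma$ is recovered from the output as $\gamma=\hat z_t$; hence the class map is a bijection if and only if $(z_t,\ldots,z_{(2^{\eta_i-1}t)_e})\mapsto\gamma$ is. But this last assignment is exactly the coordinate map of GF$(2^{\eta_i})$ with respect to the basis $\{\beta_l\}$, so it is a bijection from GF$(2)^{\eta_i}$ onto GF$(2^{\eta_i})$. Therefore the class map sends the $2^{\eta_i}$ binary inputs bijectively onto the conjugacy-constraint-satisfying tuples (those whose representative component lies in GF$(2^{\eta_i})$). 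Taking the product over the $\lambda$ classes and using $\sum_i\eta_i=e$, both the set of binary length-$e$ vectors and the set of conjugacy-constraint-satisfying vectors over GF$(2^r)$ have cardinality $\prod_i 2^{\eta_i}=2^e$, which upgrades injectivity to the claimed one-to-one correspondence. The only remaining bookkeeping would be to note the trivial edge cases $\eta_i=1$ (the bit maps to itself in GF$(2)$) and $\eta_i=r$ (the full field), but these are absorbed by the general argument.
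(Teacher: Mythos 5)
Your proof is correct, and on the bijectivity half it takes a genuinely different --- and in fact more complete --- route than the paper's. For the conjugacy constraint the two arguments coincide in substance: both split off the easy cases $0\le\mu<\eta_i-1$ and reduce the wrap-around case $\mu=\eta_i-1$ to a fixed-point identity under the $2^{\eta_i}$-power map; the paper expands $\hat{z}_{(2^{\eta_i-1}t)_e}^2=\sum_{l}\beta_l^{2^{\eta_i}}z_{(2^lt)_e}^{2^{\eta_i}}$ and applies $\beta_l^{2^{\eta_i}}=\beta_l$ and $z^{2^{\eta_i}}=z$ term by term, whereas you first note that $\gamma=\sum_l\beta_l z_{(2^lt)_e}$ lies in GF($2^{\eta_i}$) and then apply $\gamma^{2^{\eta_i}}=\gamma$ once --- the same fact, packaged more cleanly. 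The real divergence is in the one-to-one correspondence. The paper proves injectivity by contradiction (equal images force $\sum_l (z^{(1)}_{(2^lt)_e}-z^{(2)}_{(2^lt)_e})\beta_l=0$, contradicting linear independence of the basis) and then dismisses the converse with a ``similarly'' that, as written, only says a function cannot send one input to two outputs; it never identifies the image, so bijectivity onto the set of conjugacy-constraint-satisfying vectors is asserted rather than proved. Your identification of the class map as the coordinate map of GF($2^{\eta_i}$) with respect to $\{\beta_l\}$, together with the explicit inverse $\gamma=\hat{z}_t$ and the observation that the constraint-satisfying tuples on a class are exactly $(\gamma,\gamma^2,\ldots,\gamma^{2^{\eta_i-1}})$ with $\gamma\in$ GF($2^{\eta_i}$), pins down the codomain and yields a genuine bijection; taking the product over classes finishes it (the cardinality count via $\sum_i\eta_i=e$ is a pleasant sanity check, though redundant once you have class-wise bijections). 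What the paper's route buys is a short, purely coordinate-level computation that avoids any discussion of the image; what yours buys is a precise meaning for ``one-to-one correspondence,'' which is exactly what the encoding application requires, namely that every conjugacy-constraint-satisfying transformed codeword arises from some binary message.
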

\begin{proof}
First, we prove that it is a one-to-one correspondence mapping. It can be proved by contradiction. Suppose that there exist two distinct vectors $\bz^{(1)}$ and $\bz^{(2)}$ which are mapped into the same vector $\hat{\bz}$, i.e.,
\begin{equation}
\begin{array}{lll}
(\sum\limits^{\eta_i-1}_{l=0 } \beta_l z_{ (2^l t)_e}^{(1)}  )^{2^ \mu}=(\sum\limits^{\eta_i-1}_{l=0 } \beta_l z_{ (2^l t)_e}^{(2)}  )^{2^ \mu}.\\
\end{array}
\end{equation}
Since $x^2+y^2=(x+y)^2$ holds in the extension fields of GF(2), we have
$$
\begin{array}{ccc}
\left(\sum\limits^{\eta_i-1}_{l=0 } \beta_l z_{ (2^l t)_e}^{(1)}-\sum\limits^{\eta_i-1}_{l=0 } \beta_l z_{ (2^l t)_e}^{(2)} \right)^{2^ \mu}&=&0,\\
\left(\sum\limits^{\eta_i-1}_{l=0 } \beta_l (z_{ (2^l t)_e}^{(1)} - z_{ (2^l t)_e}^{(2)})   \right)^{2^ \mu}                 &=&0,\\
\sum\limits^{\eta_i-1}_{l=0 } (z_{ (2^l t)_e}^{(1)} - z_{ (2^l t)_e}^{(2)}) \beta_l                                          &=&0.
\end{array}
$$
Since $\bz^{(1)}$ and $\bz^{(2)}$  are different, there exist nonzero coefficients $z_{ (2^l t)_e}^{(1)} - z_{ (2^l t)_e}^{(2)}$. It indicts that $\beta_l$'s are linearly dependent, which contradicts the assumptions that $\beta_l$'s are a basis. Similarly, it can be proved that any two different vectors $\hat{\bz}^{(1)}$ and $\hat{\bz}^{(2)}$ are mapped from two different vectors ${\bz}^{(1)}$ and ${\bz}^{(2)}$. As a result, the nonlinear mapping (25) is bijective.

Then we prove that $\hat{\bz}$ satisfies the conjugacy constraint as follows.

 For $0\le \mu < \eta_i-1$, where $\eta_i>1$, clearly, $\hat{z}_{(2^{\mu+1}t)_e}=\hat{z}_{{(2^\mu t)}_e}^2$. For $\mu=\eta_i-1$, we obtain that
$$
\begin{array}{ccc}
\hat{z}_{(2^{\eta_i-1} t)_e}^2&=&\left((\sum\limits^{\eta_i-1}_{l=0 }  \beta_l z_{ (2^l t)_e}   )^{2^ {\eta_i-1}}\right)^2\\
&=&\sum\limits^{\eta_i-1}_{l=0 }  \beta_l^{2^{\eta_i}} z_{ (2^l t)_e}^{2^{\eta_i}}.
\end{array}
$$
Since  $\beta_l$ is over GF($2^{\eta_i}$), we obtain $\beta_l^{2^{\eta_i}}=\beta_l$. Since $\bz$ is a binary vector, $z_{ (2^l t)_e}^{2^{\eta_i}}=z_{ (2^l t)_e}$. Thus
$$
\hat{z}_{(t)_e}=\hat{z}_{{(2^\mu t)}_e}^2.\\
$$
Hence we can obtain that
\begin{equation}
\hat{z}_{(2^{\mu+1}t)_e}=\hat{z}_{{(2^\mu t)}_e}^2,\\
\end{equation}
for $0 \le \mu<\eta_i$.

Since (27) holds for all $i$, it follows that $\hat z_{(2t)_e}=\hat z_t^2$ for $0\le t<e$. This is equivalent to the condition that $\hat{\bz}$ satisfies the conjugacy constraint.\end{proof}

\end{document}